\theoremstyle{plain}
\newtheorem*{theorem*}{Theorem}
\newtheorem*{remark*}{Remark}
\newtheorem*{corollary*}{Corollary}
\newtheorem{lemma}{Lemma}
\pgfplotsset{compat=1.18}
\crefname{appendix}{Appendix}{Appendices}
\newcommand{\nstates}{n}
\newcommand{\ndim}{n_\mathrm{dim}}
\newcommand{\nalice}{n_A}
\newcommand{\nbob}{n_B}
\newcommand{\KQKD}{\mathcal{K_\mathrm{QKD}^{\widehat{\mathcal{G}},\widehat{\mathcal{Z}}}}}
\newcommand{\gmap}{\widehat{\mathcal G}}
\newcommand{\zmap}{\widehat{\mathcal Z}}
\newcommand{\numericalcitations}{colesNumericalApproach2016,winickReliableNumerical2018,wangCharacterisingCorrelations2019,primaatmajaVersatileSecurity2019,huRobustInterior2022,zhouNumericalMethod2022,araujoQuantumKey2023,lorenteQuantumKey2025,he2024,curras-lorenzoNumericalSecurity2025,kaminRenyiSecurity2025}
\newcommand{\numericalcitationsfullchar}{colesNumericalApproach2016,winickReliableNumerical2018,wangCharacterisingCorrelations2019,primaatmajaVersatileSecurity2019,huRobustInterior2022,zhouNumericalMethod2022,araujoQuantumKey2023,lorenteQuantumKey2025,he2024}
\newcommand{\analyticalsourceimpcitations}{gottesmanSecurityQuantum2004,loSecurityQuantum2007,tamakiLosstolerantQuantum2014,pereiraQuantumKey2019,pereiraQuantumKey2020,navarretePracticalQuantum2021,pereiraModifiedBB842023,curras-lorenzoSecurityHighspeed2025}
\newcommand{\analyticalpartialcharcitations}{pereiraQuantumKey2019,pereiraQuantumKey2020,navarretePracticalQuantum2021,pereiraModifiedBB842023,curras-lorenzoSecurityHighspeed2025}
\newcommand{\affvqcc}{Vigo Quantum Communication Center, University of Vigo, Vigo E-{36310}, Spain}
\newcommand{\affuvigo}{Escuela de Ingeniería de Telecomunicación, Department of Signal Theory and Communications, University of Vigo, Vigo E-36310, Spain}
\newcommand{\affatlantic}{atlanTTic Research Center, University of Vigo, Vigo E-36310, Spain}
\newcommand{\affmateus}{Departamento de Física Teórica, Atómica y Óptica, Laboratory for Disruptive Interdisciplinary Science (LaDIS), Universidad de Valladolid, 47011 Valladolid, Spain}
\begin{document}

	\setlength{\parskip}{3pt}
	\setlength{\parindent}{0pt}
	
	\author{Margarida Pereira} 	\author{Guillermo Currás-Lorenzo} 
	\affiliation{\affvqcc} \affiliation{\affuvigo} \affiliation{\affatlantic} 
   \author{Mateus Araújo}
	\affiliation{\affmateus} 

    	\title{Optimal key rates for quantum key distribution with partial source characterization}
	
	\setlength{\parskip}{3pt}
	\setlength{\parindent}{0pt}

\begin{abstract}
    Numerical security proofs based on conic optimization are known to deliver optimal secret-key rates, but so far they have mostly assumed that the emitted states are fully characterized. In practice, this assumption is unrealistic, since real devices inevitably suffer from imperfections and side channels that are extremely difficult to model in detail. Here, we extend conic-optimization methods to scenarios where only partial information about the emitted states is known, covering both prepare-and-measure and measurement-device-independent protocols. We demonstrate that our method outperforms state-of-the-art analytical and numerical approaches under realistic source imperfections, especially for protocols that use non-qubit encodings. These results advance numerical-based proofs towards a standard, implementation-ready framework for evaluating quantum key distribution protocols in the presence of source imperfections. 
\end{abstract}
	\maketitle

\textit{Introduction.}---In principle, quantum key distribution (QKD) can achieve information-theoretic security, a much higher standard than any other cryptosystem. However, security proofs often assume ideal device models, while real implementations inevitably suffer from imperfections. These deviations can create security loopholes that allow Eve to gain information beyond what the security proof accounts for, potentially compromising QKD's security guarantees.

For example, standard security proofs of the well-known BB84 protocol \cite{bennettQuantumCryptography1984} assume that Alice emits perfect qubit Pauli eigenstates, and that no information about her bit-and-basis choices is leaked to the eavesdropper, which is extremely difficult to satisfy in practice due to encoding imperfections and side channels. Significant progress has been made in developing analytical security proofs that incorporate such source imperfections \cite{\analyticalsourceimpcitations,wangPracticalLongDistance2019}. In particular, analytical security proofs \cite{\analyticalpartialcharcitations} have shown how to guarantee security under partial state characterization, i.e., when the emitted states are not completely known. For example, \cite{curras-lorenzoSecurityHighspeed2025} requires only the knowledge that
	\begin{equation}	\label{eq:partial_characterization}
		\ev{\rho_j}{\phi_j}
         \geq 1 - \epsilon_j,
	\end{equation}
where {$\rho_j$} is the actual state emitted by Alice when she selects setting $j$ (e.g., for BB84, $j \in \{0_Z,1_Z,0_X,1_X\}$), $\ket{\phi_j}$ is an arbitrary reference state, and $\epsilon_j$ bounds their deviation. However, these approaches often involve complicated derivations, as imperfections typically break the symmetries exploited by standard proofs, and can sometimes introduce loose bounds.

Numerical security proofs based on convex optimization \cite{\numericalcitations} have recently emerged as a powerful alternative to analytical approaches. These can often achieve optimal or near-optimal key rates even for scenarios without such symmetries, which should in principle make them well-suited for handling imperfect sources. However, they often assume a complete characterization of the emitted states \cite{\numericalcitationsfullchar}, including all imperfections. This is a very unrealistic demand, since some imperfections are extremely difficult to characterize precisely.

It is then natural to ask whether numerical security proofs could incorporate partial state knowledge as analytical proofs do. Indeed, \cite{curras-lorenzoNumericalSecurity2025} has proposed a numerical approach that can bound the asymptotic key rate for prepare-and-measure and measurement-device-independent (MDI) type protocols given the knowledge in \cref{eq:partial_characterization}. However, the approach in \cite{curras-lorenzoNumericalSecurity2025} is not fully general, and it relies on phase-error estimation, which is not optimal for some protocols \cite{matsuuraAsymptoticallyTight2025}.

Thus, it is important to consider whether partial state characterization could be integrated into more general numerical techniques that directly optimize Eve's side information on Alice's key. Here, we introduce a simple yet effective method to achieve this. Specifically, we show how to extend the optimization framework in \cite{lorenteQuantumKey2025}, which calculates \textit{optimal} QKD key rates under full state characterization, to also calculate \textit{optimal} QKD key rates under partial state characterization (i.e., the assumption in \cref{eq:partial_characterization}). Moreover, we demonstrate that our method provides secret-key rates (SKR) surpassing those in \cite{curras-lorenzoNumericalSecurity2025}, particularly for protocols employing non-qubit encodings.

\textit{Optimal key rates with full state characterization.}---Consider a general prepare-and-measure QKD protocol in which, for each round, Alice sends known states $\{\ket{\psi_j}\}_j$, selected with probabilities $\{p_j\}_j$, where $j \in \{0,\ldots,\nstates-1\}$ and $\nstates$ is the total number of emitted states. To prove security, we can use an equivalent scenario in which Alice generates the source replacement state
\begin{equation}
    \ket{\Psi}_{AA'} = \sum_{j=0}^{\nstates-1} \sqrt{p_j} \ket{j}_A \ket{\psi_j}_{A'},
\end{equation}
where $A$ is an ancillary system that Alice keeps in her possession, and $A'$ is the photonic system sent through the quantum channel. Eve's collective attack can be described as an isometry $V_{A' \to BE}$, where $B$ is the system that Eve resends to Bob, and $E$ is Eve's ancillary system containing her side information. Since we are interested in the reduced state $\rho_{AB}$ that Alice and Bob share after Eve's attack, we can define a CPTP map $\mathcal{E}_{A' \to B}$ that consists of first applying $V_{A' \to BE}$ and then tracing out system $E$. By doing so, we can write
\begin{equation}
\label{eq:eves_map}
    \rho_{AB} = (\mathcal{I} \otimes \mathcal{E}_{A' \to B}) (\ketbra{\Psi}_{AA'}).
\end{equation}
Once Bob receives system $B$, he applies a POVM $\{\Gamma_k\}_k$ to the incoming state, where $k$ denotes his measurement outcome. 

The asymptotic key rate of a QKD protocol can be written as \cite{winickReliableNumerical2018}
\begin{equation}
\label{eq:asymp_skr}
    R_\infty =  D(\mathcal{G}(\rho_{AB})\|\mathcal{Z}(\mathcal{G}(\rho_{AB}))) - p_\mathrm{pass} \lambda_\mathrm{EC}.
\end{equation}
Here, $D$ is the quantum relative entropy, $\mathcal{G}$ and $\mathcal{Z}$ are some completely positive maps defined in \cite{winickReliableNumerical2018} that specify how Alice's sifted key is extracted, $p_\mathrm{pass}$ represents the probability that a round is used for sifted key generation, and $\lambda_\mathrm{EC}$ denotes the cost of error correction per sifted key bit. Note that the latter two terms are directly observable in the protocol, but the term $D(\mathcal{G}(\rho_{AB})\|\mathcal{Z}(\mathcal{G}(\rho_{AB})))$ cannot be directly calculated since the state $\rho_{AB}$ is unknown (because Eve's map $\mathcal{E}_{A' \to B}$ is unknown). Therefore, this term needs to be optimized given the knowledge that is available to Alice and Bob. In particular, this is given by the convex optimization problem
\begin{equation}
\label{eq:convex_problem}
\begin{gathered}
    \min_{\rho_{AB}} D(\mathcal{G}(\rho_{AB})\|\mathcal{Z}(\mathcal{G}(\rho_{AB}))) \\
    \text{s.t.  } \langle i \vert \rho_{A} \vert j \rangle = \sqrt{p_i p_j} \braket{\psi_j}{\psi_i}, \quad \forall i,j \in \{0,\ldots,\nstates-1\}, \\
    \Tr[ (\dyad{j}{j}_A \otimes \Gamma_k) \rho_{AB}] = p_j Y_{k \vert j}, \quad \forall j\in \{0,\ldots,\nstates-1\}, k, \\
    \rho_{AB} \succeq 0.
\end{gathered}
\end{equation}
The first constraint in \cref{eq:convex_problem} is equivalent to
\begin{equation}
    \rho_A \coloneqq \Tr_B [\rho_{AB}] = \Tr_{A'} [\ketbra{\Psi}_{AA'}],
\end{equation}
and comes from the fact that Eve's map does not act on system $A$ (see \cref{eq:eves_map}), and therefore the marginal state on $A$ must be the same before and after Eve applies her map. Note that this implies the constraint $\Tr[\rho_{AB}] = 1$, which then should not be added explicitly, as redundant constraints lead to numerical problems. The second constraint in \cref{eq:convex_problem} comes from Alice and Bob's observations during the protocol execution, where $Y_{k \vert j}$ refers to the conditional probability that Bob obtains outcome $k$ given that Alice sends the state $\ket{\psi_j}$.

In \cite{lorenteQuantumKey2025}, it is shown that a problem of the form in \cref{eq:convex_problem} can be reformulated as the following conic optimization problem
\begin{equation}
\label{eq:conic_problem_full_characterization}
\begin{gathered}
    \min_{h,\rho_{AB}} h \\
    \text{s.t.  } \langle i \vert \rho_{A} \vert j \rangle = \sqrt{p_i p_j} \braket{\psi_j}{\psi_i}, \quad \forall i,j \in \{0,\ldots,\nstates-1\}, \\
    \text{Tr}[ (\dyad{j}{j}_A \otimes \Gamma_k) \rho_{AB}] = p_j Y_{k \vert j}, \quad \text{  }\forall j\in \{0,\ldots,\nstates-1\}, k, \\
    (h,\rho_{AB}) \in \KQKD,
\end{gathered}
\end{equation}
where $\KQKD$ is the QKD cone, defined as
\begin{equation}\label{eq:qkdcone}
\{(h,\rho) \in \mathbb R \times \mathbb H^{d};\, \rho \succeq 0, \  h \ge -H(\widehat{\mathcal G}(\rho)) + H(\widehat{\mathcal Z}(\rho)) \},
\end{equation}
where $\mathbb H^{d}$ denotes the space of $d \times d$ complex Hermitian matrices, and $\gmap, \zmap$ are obtained by performing facial reduction on $\mathcal G, \mathcal Z$ and on the constraints of the problem (see \cite{lorenteQuantumKey2025} for more information). As shown in \cite{lorenteQuantumKey2025}, a conic optimization problem of the form in \cref{eq:conic_problem_full_characterization} can be efficiently solved using interior point algorithms. In fact, the authors of \cite{lorenteQuantumKey2025} have developed a publicly available code to implement this optimization \cite{conicqkd2024}.

\textit{Optimal key rates with partial state characterization.}---The approach described above works only when Alice's emitted states are fully known. Here, we show how to extend it to the scenario in which Alice's emitted states $\{\rho_j\}_j$ are only known to satisfy \cref{eq:partial_characterization} for some known coefficients $\{\epsilon_j\}_j$, with $ 0 \leq \epsilon_j \leq 1$, and some known reference states $\{\ket{\phi_j}\}_j$.

First, we note that, given the knowledge in \cref{eq:partial_characterization}, one can assume without loss of generality that the emitted states have the form $\rho_j = \ketbra{\psi_j}$, with
\begin{equation}
\label{eq:partial_characterization_pure}
    \ket{\psi_j} = \sqrt{1-\epsilon_j} \ket{\phi_j} + \sqrt{\epsilon_j} \ket{\phi_j^\perp},
\end{equation}
where $\ket*{\phi_j^\perp}$ is an unknown state such that $\braket*{\phi_j^\perp}{\phi_j} = 0$. As proven in Appendix \ref{app:justification_pure_state_assumption} (see also \cite[Lemma 1]{curras-lorenzoNumericalSecurity2025}), any set of states $\{\rho_j\}_j$ satisfying \cref{eq:partial_characterization} can be obtained by applying a CPTP map to a set of pure states $\{\ket{\psi_j}\}_j$ of the form in \cref{eq:partial_characterization_pure}. This directly implies that, if security can be established for all such sets of pure states, then security is guaranteed for all sets of mixed states satisfying \cref{eq:partial_characterization}. Importantly, since pure states of the form in \cref{eq:partial_characterization_pure} themselves satisfy \cref{eq:partial_characterization}, this restriction does \textit{not} constitute a relaxation. Therefore, the worst-case SKR under the knowledge in \cref{eq:partial_characterization} can be expressed as the solution to the optimization problem
\begin{widetext}
\begin{equation}
\label{eq:non_conic_problem_partial_characterization}
\begin{gathered}
    \min_{h,\rho_{AB},\{\ket*{\phi_j^\perp}\}_j} h \\
    \text{s.t.  } \frac{\langle i \vert \rho_{A} \vert j\rangle}{\sqrt{p_i p_j}} = \sqrt{(1-\epsilon_i)(1-\epsilon_j)} \braket*{\phi_j}{\phi_i} + \sqrt{(1-\epsilon_i)\epsilon_j} \braket*{\phi_j^\perp}{\phi_i} + \sqrt{\epsilon_i(1-\epsilon_j)} \braket*{\phi_j}{\phi_i^\perp} + \sqrt{\epsilon_i\epsilon_j} \braket*{\phi_j^\perp}{\phi_i^\perp}, \quad \forall i,j, \\
    \text{Tr}[ (\dyad{j}{j}_A \otimes \Gamma_k) \rho_{AB}] =  p_j Y_{k \vert j}, \quad \forall j, k, \\
    \braket*{\phi_j^\perp} = 1, \quad \forall j, \\
    \braket*{\phi_j^\perp}{\phi_j} = 0, \quad \forall j, \\
    (h,\rho_{AB}) \in \KQKD.
\end{gathered}
\end{equation}
%
Next, we show how to reformulate this as a conic problem. The key observation is that the optimization problem depends on the unknown states $\{\ket*{\phi_j^\perp}\}_j$ only through their inner products with each other and with the known reference states $\{\ket{\phi_j}\}_j$. Therefore, instead of optimizing over the states themselves, we can optimize over a Gram matrix containing all these inner products, which converts the problem into a conic form. To be more precise, let $G$ be the Gram matrix of the union of vector sets $\{\ket{\phi_j}\}_j \cup \{\ket*{\phi_j^\perp}\}_j$, i.e., $G$ is the $(2\nstates \times 2\nstates)$ positive semidefinite matrix:
\begin{equation}
    G = \begin{pmatrix}
        \langle\phi_0|\phi_0\rangle & \langle\phi_0|\phi_1\rangle & \cdots & \vline & \langle\phi_0|\phi_0^\perp\rangle & \langle\phi_0|\phi_1^\perp\rangle & \cdots \\
        \langle\phi_1|\phi_0\rangle & \langle\phi_1|\phi_1\rangle & \cdots & \vline & \langle\phi_1|\phi_0^\perp\rangle & \langle\phi_1|\phi_1^\perp\rangle & \cdots \\
        \vdots & \vdots & \ddots & \vline & \vdots & \vdots & \ddots \\
        \hline
        \langle\phi_0^\perp|\phi_0\rangle & \langle\phi_0^\perp|\phi_1\rangle & \cdots & \vline & \langle\phi_0^\perp|\phi_0^\perp\rangle & \langle\phi_0^\perp|\phi_1^\perp\rangle & \cdots \\
        \langle\phi_1^\perp|\phi_0\rangle & \langle\phi_1^\perp|\phi_1\rangle & \cdots & \vline & \langle\phi_1^\perp|\phi_0^\perp\rangle & \langle\phi_1^\perp|\phi_1^\perp\rangle & \cdots \\
        \vdots & \vdots & \ddots & \vline & \vdots & \vdots & \ddots
    \end{pmatrix}.
\end{equation}
Note that some entries of this matrix are known, while others are not. More specifically, the known entries are: the entire upper left subblock (since the states $\{\ket{\phi_j}\}_j$ are known), all the diagonal entries (since all the states are normalized), and all the entries of the form $\braket*{\phi_j^\perp}{\phi_j}$ (which are equal to zero). 

This is not a relaxation either, as the existence of a Gram matrix respecting these constraints is equivalent to the existence of a set of vectors with the required inner products. Thus, by introducing $G$ as an optimization variable constrained by this known information, we can reformulate \cref{eq:non_conic_problem_partial_characterization} as the conic optimization problem
%
\begin{equation}
\label{eq:conic_problem_partial_characterization}
\begin{gathered}
    \min_{h,\rho_{AB},G} h \\
    \text{s.t.  } \frac{\langle i \vert \rho_{A} \vert j\rangle}{\sqrt{p_i p_j}} = \sqrt{(1-\epsilon_i)(1-\epsilon_j)} G_{j,i} + \sqrt{(1-\epsilon_i)\epsilon_j} G_{j+\nstates,i} + \sqrt{\epsilon_i(1-\epsilon_j)} G_{j,i+\nstates} + \sqrt{\epsilon_i\epsilon_j} G_{j+\nstates,i+\nstates}, \quad \forall i,j, \\
    \text{Tr}[ (\dyad{j}{j}_A \otimes \Gamma_k) \rho_{AB}] =  p_j Y_{k \vert j}, \quad \forall j, k, \\
    G_{i,j} = \braket{\phi_i}{\phi_j}, \quad \forall i,j; i\neq j, \\
    G_{j+\nstates,j} = 0, \quad \forall j, \\
    G_{j,j} = 1, \quad  G_{j+\nstates,j+\nstates} = 1,\quad \forall j,\\
    (h,\rho_{AB}) \in \KQKD, \\
    G \succeq 0,
\end{gathered}
\end{equation}
\end{widetext}
whose solution represents the exact SKR under partial state characterization, i.e., the knowledge in \cref{eq:partial_characterization}. Note that this is a conic optimization problem since the objective is linear and all constraints are either linear equalities or specify membership in convex cones (the QKD cone $\KQKD$ for $(h,\rho_{AB})$ and the positive semidefinite cone for $G$). Moreover, we remark that this conic optimization problem can be directly solved using the techniques in \cite{lorenteQuantumKey2025}, and in particular, one can simply reuse the publicly available code in \cite{conicqkd2024} after adding the new constraints. 

Note that when the reference states $\{\ket{\phi_j}\}_j$ are linearly dependent, no full rank $G$ exists, which can cause issues when solving \cref{eq:conic_problem_partial_characterization}. As shown in Appendix \ref{sec:lineardep}, this can be straightforwardly solved by constructing $G$ using a basis for $\text{span} \{\ket{\phi_j}\}_j$, effectively reducing its dimension.

MDI-type protocols can be handled in an analogous manner: first reformulate them in the framework of \cite{lorenteQuantumKey2025}, and then extend them with the partial state characterization constraints. We show this explicitly in Appendix \ref{app:mdi}.

\textit{Numerical results.}---Here, we apply our conic optimization approach to estimate the SKR for selected prepare-and-measure and MDI-type protocols with imperfect and partially characterized sources. To simulate the data that one would obtain in an actual implementation, we use standard channel models with detector efficiency $\eta_{d} = 0.73$, dark count probability $p_d = 10^{-6}$ and error correction inefficiency $f = 1.16$. For more information on the protocol modeling, see Appendix \ref{app:modelling}.

We compare our results with the phase-error estimation method of \cite{curras-lorenzoNumericalSecurity2025}, which provides an asymptotically optimal bound on the phase-error rate (see Appendix \ref{sec:gram} for a proof of this claim). While phase-error estimation is widely used in QKD security proofs, it is known to underestimate the SKR is some situations \cite{matsuuraAsymptoticallyTight2025}. Thus, our comparison allows us to evaluate when the phase-error approach remains near-optimal and when our direct optimization of Eve's information provides an advantage.

We first consider the BB84 protocol and assume that the reference states are of the form 
\begin{equation}
\label{eq:qubit_state_model}
    \ket{\phi_j} = \cos(\theta_j)\ket{0_Z} + \sin(\theta_j)\ket{1_Z}.
\end{equation}
Here, $\theta_j = (1+\delta/\pi) \varphi_j/2$, with $\varphi_j \in \{0,\pi,\pi/2,3\pi/2\}$ and $\delta \in [0,\pi)$, for $j \in \{0,1,2,3\}$, which corresponds to the bit-and-basis values $\{0_Z,1_Z,0_X,1_X\}$. The actual emitted states $\rho_j$ satisfy the partial characterization constraint in \cref{eq:partial_characterization} with $\epsilon_j=\epsilon~\forall j$. In this model, $\delta$ quantifies the magnitude of characterized state preparation flaws within the qubit space (e.g., polarization rotation errors), while $\epsilon$ incorporates uncharacterized imperfections such as information leakage through additional degrees of freedom or side channels \cite{curras-lorenzoSecurityHighspeed2025}. For Bob's measurement, we consider a standard BB84 setup with an active basis choice. Since numerical methods require Bob's POVM to act on a finite-dimensional system while optical detectors act on an infinite-dimensional Fock space, we apply the qubit squashing map \cite{beaudrySquashingModels2008,gittsovichSquashingModel2014}. For simplicity, we consider that all of Bob's detectors have the same efficiencies and dark count rates, which can be treated as channel-induced losses and noise, respectively \cite{naharImperfectDetectors2025}. Scenarios in which the efficiencies and dark count rates of Bob's detectors differ (in a partially characterized way) can be incorporated into our optimization problem by applying the tools in \cite{naharImperfectDetectors2025} in combination with the flag-state squasher \cite{zhangSecurityProof2021}.

Figure \ref{fig:bb84} shows the SKR as a function of distance for $\delta = 0.14$ and different values of $\epsilon$. The solid lines represent our conic optimization approach, while the dashed lines show the phase-error estimation results from \cite{curras-lorenzoNumericalSecurity2025}. We observe that both methods give nearly identical results across all parameter regimes. The phase-error approach is known to be asymptotically optimal for ideal BB84, and our results suggest it remains near-optimal even with source imperfections, at least under our imperfections model. Furthermore, since our conic optimization results coincide with those of the phase-error method, which was already shown in \cite{curras-lorenzoNumericalSecurity2025} to reproduce the analytical key rates of \cite{curras-lorenzoSecurityHighspeed2025}, these findings demonstrate that \cite{curras-lorenzoSecurityHighspeed2025} is essentially optimal for this scenario, at least in the asymptotic regime—an interesting result, since analytical bounds used to handle imperfect and partially characterized sources can often introduce looseness.
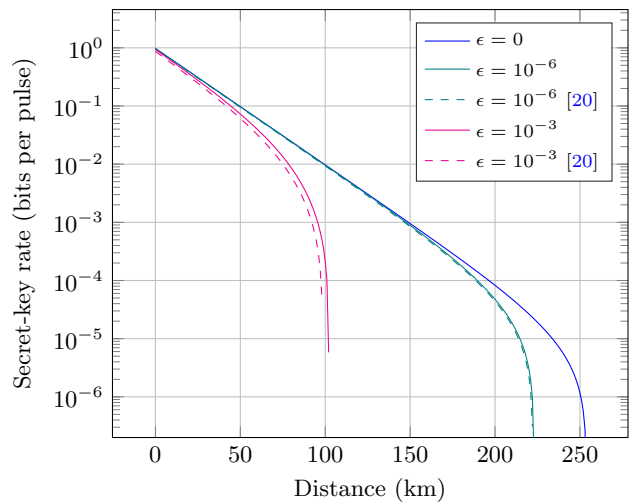
\begin{figure}[h!]
	\centering
 	\begin{tikzpicture}
		\begin{axis}[%
            ymode = log,
			ymin=2e-7,
			grid=major,
			xlabel = {Distance (km)},
            ylabel = {Secret-key rate (bits per pulse)},
			axis background/.style={fill=white},
			legend style={at={(0.97,0.97), font=\scriptsize},legend cell align=left, align=left, draw=white!15!black}
			]
            \addplot[color=blue] table[col sep=space] {plot_data/bb84e0};
            \addlegendentry{$\epsilon = 0$}
            \addplot[color=teal] table[col sep=space] {plot_data/bb84e6};
            \addlegendentry{$\epsilon = 10^{-6}$}
            \addplot[color=teal, dashed] table[col sep=space] {plot_data/bb84e6eph};
            \addlegendentry{$\epsilon = 10^{-6}$ \cite{curras-lorenzoNumericalSecurity2025}}
            \addplot[color=magenta] table[col sep=space] {plot_data/bb84e3};
            \addlegendentry{$\epsilon = 10^{-3}$}
            \addplot[color=magenta, dashed] table[col sep=space] {plot_data/bb84e3eph};
            \addlegendentry{$\epsilon = 10^{-3}$ \cite{curras-lorenzoNumericalSecurity2025}}
        \end{axis}
	\end{tikzpicture}
	\caption{Secret-key rate (logarithmic scale) as a function of distance for the BB84 protocol with $\delta = 0.14$ and different values of $\epsilon$. The solid lines correspond to our conic optimization approach while the dashed lines correspond to the phase-error estimation method of \cite{curras-lorenzoNumericalSecurity2025}. The curve obtained for $\epsilon = 0$ when using the latter analysis is omitted as its visually indistinguishable from that obtained when using our approach.}
 \label{fig:bb84}
\end{figure}  

Next, we analyze a coherent-light-based MDI protocol introduced in \cite{navarretePracticalQuantum2021}, which can be regarded as a simplified version of twin-field QKD \cite{lucamariniOvercomingRate2018}. This scheme employs a very different encoding from BB84: instead of qubit states, each user emits two coherent states with opposite phases (for key generation) plus a vacuum state (for parameter estimation). Since phase-error estimation was developed specifically for qubit-based protocols like BB84, this coherent-state encoding is a good benchmark to evaluate whether our direct optimization method provides advantages for protocols whose encoding differs significantly from BB84. 

In this case, the joint reference states of Alice and Bob are
\begin{equation}
    \ket{\phi_{ij}} = \ket{\phi_i} \otimes \ket{\phi_j}, ~~ \text{for } i,j \in \{0,1,2\},
\end{equation}
where the individual states are defined as
\begin{equation}
\label{eq:reference_states_MDI}
\begin{aligned}
    &\ket{\phi_0} = \ket{\alpha}, ~~\ket{\phi_1} = \ket{-\alpha e^{i\delta}}~~ {\rm and} ~~\ket{\phi_2} = \ket{\text{vac}},
    \end{aligned}
\end{equation}
with $\alpha$ denoting the amplitude of the coherent state and $\delta$ the characterized state preparation flaws. We consider that the joint emitted states $\rho_{ij} = \rho_i \otimes \rho_j$ satisfy the partial characterization constraint
\begin{equation}
    \ev{\rho_{ij}}{\phi_{ij}} \geq (1 - \epsilon)^2, \quad \forall i,j,
\end{equation}
where the squared term comes from the fact that this constraint could be obtained by combining two individual constraints of the form in \cref{eq:partial_characterization}, one for each user.

Figure \ref{fig:mdi} presents the SKR for this MDI-type protocol with $\delta = 0.14$ and various values of $\epsilon$. 
In this case, we observe a clear performance gap between our conic optimization (solid lines) and the phase-error approach (dashed lines) \cite{curras-lorenzoNumericalSecurity2025}. Our method consistently yields higher SKR values, and the advantage becomes more pronounced at longer distances and for smaller $\epsilon$. For instance, at $\epsilon = 0$, our approach extends the maximum achievable distance by more than 10 km compared to the phase-error estimation method. 
We have performed additional simulations with $\delta=0$ and $p_d = 10^{-4}$, and observed that, in this regime, our approach also outperforms that in \cite{curras-lorenzoNumericalSecurity2025}
(see \cref{app:additional_simulations} for more details). We note that, in the asymptotic regime, \cite{curras-lorenzoNumericalSecurity2025} already outperforms considerably the tightest analytical proof available for this protocol \cite{curras-lorenzoSecurityHighspeed2025}.
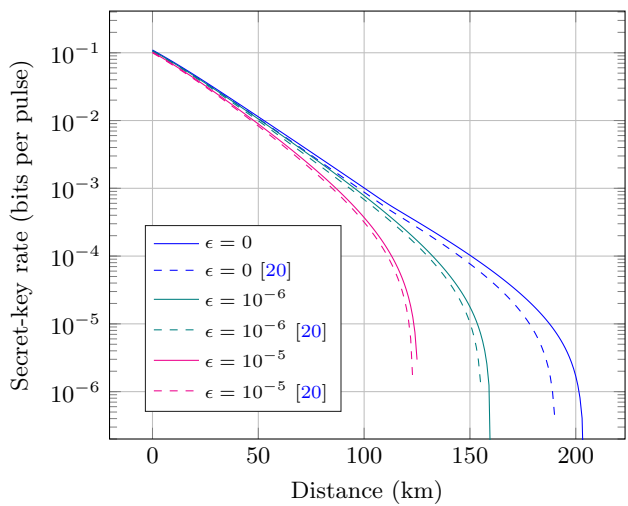
\begin{figure}[h!]
	\centering
 	\begin{tikzpicture}
		\begin{axis}[%
            ymode = log,
			ymin=2e-7,
			grid=major,
			xlabel = {Distance (km)},
            ylabel = {Secret-key rate (bits per pulse)},
			axis background/.style={fill=white},
			legend style={at={(0.45,0.5), font=\scriptsize},legend cell align=left, align=left, draw=white!15!black}
			]
            \addplot[color=blue] table[col sep=space] {plot_data/mdid14e0};
            \addlegendentry{$\epsilon = 0$}
            \addplot[color=blue, dashed] table[col sep=space] {plot_data/mdid14e0eph};
            \addlegendentry{$\epsilon = 0$ \cite{curras-lorenzoNumericalSecurity2025}}
            \addplot[color=teal] table[col sep=space] {plot_data/mdid14e6};
            \addlegendentry{$\epsilon = 10^{-6}$}
            \addplot[color=teal, dashed] table[col sep=space] {plot_data/mdid14e6eph};
            \addlegendentry{$\epsilon = 10^{-6}$ \cite{curras-lorenzoNumericalSecurity2025}}
            \addplot[color=magenta] table[col sep=space] {plot_data/mdid14e5};
            \addlegendentry{$\epsilon = 10^{-5}$}
            \addplot[color=magenta, dashed] table[col sep=space] {plot_data/mdid14e5eph};
            \addlegendentry{$\epsilon = 10^{-5}$ \cite{curras-lorenzoNumericalSecurity2025}}
        \end{axis}
	\end{tikzpicture}
	\caption{Secret-key rate rate (logarithmic scale) as a function of distance for the coherent-light-based MDI-type protocol  introduced in \cite{navarretePracticalQuantum2021}, with $\delta = 0.14$ and different values of $\epsilon$. The parameter $\alpha$ has been optimized for each distance point, taking values in the range $[0.03, 0.5]$. All computations were performed with double floats for increased precision and stability.}
 \label{fig:mdi}
\end{figure}

\textit{Conclusion.}
 In this work, we introduced a conic optimization approach that extends the framework of \cite{lorenteQuantumKey2025} from full to partial state characterization, thereby providing \textit{optimal} key rates for general prepare-and-measure and MDI-type protocols under realistic conditions. This method directly optimizes Eve’s side information and overcomes the limitations of \cite{curras-lorenzoNumericalSecurity2025}, which also considered partial characterization but relied on phase-error estimation and lacked full generality \footnote{When applied to prepare-and-measure protocols, the approach in \cite{curras-lorenzoNumericalSecurity2025} requires that Bob performs two POVMs with the same overall detection efficiency, which only makes sense in BB84-type protocols with an active measurement basis choice. Thus, the approach in \cite{curras-lorenzoNumericalSecurity2025} is not directly applicable, for example, to the six-state protocol, or even to BB84 with a passive measurement basis choice. Our approach does not have these limitations, and can be applied to general prepare-and-measure protocols.}. For qubit-based protocols such as BB84, our approach achieves nearly identical performance to the phase-error estimation method, showing that the latter is nearly optimal in this setting. In contrast, for protocols with non-qubit encodings, such as a coherent-light-based MDI protocol, our approach offers a clear improvement with respect to that in \cite{curras-lorenzoNumericalSecurity2025}. These results underscore the potential of conic optimization for analyzing realistic source imperfections and establishing security across a broad range of QKD protocols while achieving high performance.

A natural next step is applying our framework to decoy-state QKD \cite{hwangQuantumKey2003,loDecoyState2005,wangBeatingPhotonNumberSplitting2005}. For perfectly phase-randomized weak coherent sources, this is straightforward: the standard decoy analysis provides upper and lower bounds on the single-photon yields, which can be incorporated into our optimization problem by replacing the corresponding equality constraints with inequalities. However, realistic imperfections and side channels in intensity modulation and phase randomization prevent signals from being modeled as ideal photon-number mixtures, invalidating this simple approach. While \cite{kaminRenyiSecurity2025} has recently shown how to incorporate some of these imperfections into numerical-based proofs, developing a general framework that handles both bit-and-basis encoding imperfections and decoy imperfections in a partially uncharacterized manner remains an important open problem. We anticipate that the tools introduced in our work will prove valuable in addressing this challenge.

Another important challenge is the extension of our approach to the finite-key regime. For this, one could employ established methods such as the postselection technique \cite{christandlPostselectionTechnique2009,naharPostselectionTechnique2024} or the generalized entropy accumulation theorem \cite{metgerGeneralisedEntropy2022}, which would require incorporating statistical fluctuations into the constraints but would largely preserve the structure of our optimization. However, more recent approaches based on the marginal-constrained entropy accumulation theorem \cite{arqandMarginalconstrainedEntropy2025} in combination with optimization over Rényi entropies \cite{kaminRenyiSecurity2025} provide significantly tighter finite-key estimates. While adapting our framework to this setting would require extending the conic formulation to handle Rényi rather than von Neumann entropies, such an extension is expected to be feasible and would provide a promising route towards tight and practical finite-key security guarantees.

\textit{Code availability.} The code to reproduce our results is available at \url{https://github.com/araujoms/qkd_partial_characterization}.

\textit{Acknowledgements.}
We thank Matej Pivoluska for the idea to rephrase our justification in Appendix~\ref{app:justification_pure_state_assumption} in terms of source mapping and for discussions regarding the extension of our framework to decoy-state protocols. Also, we thank Marcos Curty and Álvaro Navarrete for valuable discussions. The research of M.A.~was supported by the Spanish Agencia Estatal de Investigación, Grant No.~RYC2023-044074-I, by the Q-CAYLE project, funded by the European Union-Next Generation UE/MICIU/Plan de Recuperación, Transformación y Resiliencia/Junta de Castilla y León (PRTRC17.11), and also by the Department of Education of the Junta de Castilla y León and FEDER Funds (Reference: CLU-2023-1-05). M.P. and G.C.-L.~acknowledge support from the Galician Regional Government (consolidation of research units: atlanTTic), the Spanish Ministry of Economy and Competitiveness (MINECO), the Fondo Europeo de Desarrollo Regional (FEDER) through the grant No.~PID2020-118178RB-C21, MICIN with funding from the European Union NextGenerationEU (PRTRC17.I1) and the Galician Regional Government with own funding through the “Planes Complementarios de I+D+I con las Comunidades Autonomas” in Quantum Communication, the “Hub Nacional de Excelencia en Comunicaciones Cuanticas” funded by the Spanish Ministry for Digital Transformation and the Public Service and the European Union NextGenerationEU, the European Union’s Horizon Europe Framework Programme under the Marie Sklodowska-Curie Grant No.~101072637 (Project QSI), the project “Quantum Security Networks Partnership” (QSNP, grant agreement No 101114043) and the European Union via the European Health and Digital Executive Agency (HADEA) under the Project QuTechSpace (grant 101135225). G.C.-L. acknowledges funding from the European Union's Horizon Europe research and innovation programme under the Marie Skłodowska-Curie Postdoctoral Fellowship grant agreement No.\ 101149523.

\clearpage

\appendix

\onecolumngrid

\section{Justification for \texorpdfstring{\cref{eq:partial_characterization_pure}}{Eq. (9)}}
\label{app:justification_pure_state_assumption}

Here, we provide a rigorous justification for our assumption in \cref{eq:partial_characterization_pure} of the main text. We note that this is essentially the same justification as in \cite[Lemma 1]{curras-lorenzoNumericalSecurity2025} (see also a similar previous result in \cite{curras-lorenzoSecurityHighspeed2025}) but rephrased in the language of \textit{source mapping}.

\begin{lemma}
    Let $\{\rho_j\}_j$ with $j \in \{0,1,...,\nstates-1\}$ be a set of states satisfying
    \begin{equation} \label{eqapp:partial_characterization}
        \ev{\rho_j}{\phi_j} \geq 1 - \epsilon_j, \quad\forall j,
    \end{equation}
    for a fixed set of pure states $\{\ket{\phi_j}\}_j$. Then there exist a CPTP map $\Phi$ and pure states $\{\ket{\psi_j}\}_j$ of the form
    \begin{equation}
        \label{eqapp:partial_characterization_pure}
        \ket{\psi_j} = \sqrt{1-\epsilon_j} \ket{\phi_j} + \sqrt{\epsilon_j} \ket{\phi_j^\perp},
    \end{equation}
    where $\braket{\phi_j^\perp}{\phi_j} = 0$, such that
    \begin{equation}
    \label{eq:map}
        \rho_j = \Phi\big(\ketbra{\psi_j}\big), \quad \forall j.
    \end{equation}
    As a consequence, for any prepare-and-measure QKD protocol, if security can be established for \emph{all} pure state preparations $\{\ket{\psi_j}\}_j$ satisfying \cref{eqapp:partial_characterization_pure}, then security is guaranteed for \emph{all} mixed state preparations $\{\rho_j\}_j$ satisfying \cref{eqapp:partial_characterization}.
\end{lemma}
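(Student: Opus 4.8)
\emph{Proof plan.} I would split the statement into the construction of the map (\cref{eq:map}) and the security ``consequence'', the latter being immediate once the former is available: given $\{\rho_j\}_j$ obeying \cref{eqapp:partial_characterization} together with the map $\Phi$ satisfying $\rho_j=\Phi(\ketbra{\psi_j})$, the joint state produced in the $\rho_j$-protocol under any collective attack $\mathcal E$, namely $(\mathcal I\otimes\mathcal E)\big(\sum_j p_j\ketbra{j}\otimes\rho_j\big)$, equals the one produced in the $\psi_j$-protocol under the collective attack $\mathcal E\circ\Phi$; since Bob's POVM and the sifting maps are indexed only by $j$ and by Bob's outcome, the key-rate functional \cref{eq:asymp_skr} evaluates to the same number, so a bound valid against \emph{all} attacks on $\{\ket{\psi_j}\}_j$ lower bounds the key rate of $\{\rho_j\}_j$ under every $\mathcal E$. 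Thus everything reduces to exhibiting $\{\ket{\psi_j}\}_j$ and $\Phi$.

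For that I would take $\Phi$ to be a partial trace and each $\ket{\psi_j}$ to be a carefully chosen purification of $\rho_j$. Fix an auxiliary Hilbert space $\mathcal H_E$ with $\dim\mathcal H_E>\max_j\operatorname{rank}\rho_j$ and a unit vector $\ket e\in\mathcal H_E$; work in $\mathcal H\coloneqq\mathcal H_A\otimes\mathcal H_E$, embed the given reference states as $\ket{\bar\phi_j}\coloneqq\ket{\phi_j}\otimes\ket e$ (this leaves $\epsilon_j$ and \cref{eqapp:partial_characterization} untouched, and such an enlargement is harmless in the QKD setting, where the physical space is infinite dimensional anyway), and set $\Phi\coloneqq\Tr_{\mathcal H_E}$. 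Then $\rho_j=\Phi(\ketbra{\psi_j})$ holds exactly when $\ket{\psi_j}\in\mathcal H$ is a purification of $\rho_j$, and such a purification has the form of \cref{eqapp:partial_characterization_pure} (with reference vector $\ket{\bar\phi_j}$) exactly when, after fixing its global phase, $\braket{\psi_j}{\bar\phi_j}=\sqrt{1-\epsilon_j}$ — the orthogonal complement then defining a unit vector $\ket{\phi_j^\perp}\perp\ket{\bar\phi_j}$ automatically.

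The heart of the argument is then to produce, for each $j$, a purification $\ket{\psi_j}$ of $\rho_j$ with $|\braket{\psi_j}{\bar\phi_j}|=\sqrt{1-\epsilon_j}$. By Uhlmann's theorem, $\max_{\ket\Psi}|\braket{\Psi}{\bar\phi_j}|=F(\rho_j,\ketbra{\phi_j})=\sqrt{\ev{\rho_j}{\phi_j}}\ge\sqrt{1-\epsilon_j}$, the maximum running over purifications $\ket\Psi\in\mathcal H$ of $\rho_j$. On the other hand, since $\dim\mathcal H_E>\operatorname{rank}\rho_j$, writing $\rho_j=\sum_k\lambda_k\ketbra{u_k}$ and choosing orthonormal $\ket{w_k}\in\mathcal H_E$ all orthogonal to $\ket e$ gives a purification $\sum_k\sqrt{\lambda_k}\ket{u_k}\ket{w_k}$ with zero overlap with $\ket{\bar\phi_j}$. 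Because every purification of $\rho_j$ in $\mathcal H$ is of the form $(I_A\otimes U)\ket{\Psi_0}$ with $U\in\mathcal U(\mathcal H_E)$, the set of purifications is connected, so the continuous function $\ket\Psi\mapsto|\braket{\Psi}{\bar\phi_j}|$ attains every value in $[0,\sqrt{\ev{\rho_j}{\phi_j}}]$, in particular $\sqrt{1-\epsilon_j}$. Taking such a $\ket{\psi_j}$ for every $j$ — crucially with the \emph{same} $\ket e$ and $\mathcal H_E$ — and fixing phases finishes the construction, and the ``consequence'' follows as in the first paragraph.

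The main obstacle I anticipate is this last step: ensuring a single auxiliary space and vector $\ket e$ simultaneously serve all settings $j$, and arguing cleanly that the Hilbert-space enlargement is without loss of generality for the security claim. If one instead wants $\Phi$ written down explicitly, the alternative is to block-decompose $\rho_j$ with respect to $\ketbra{\phi_j}$ and $I-\ketbra{\phi_j}$, read off $\ket{\phi_j^\perp}$ from the off-diagonal block, and verify positivity of the induced residual state $\tau_j$ using the Schur-complement inequality $\ev{\rho_j}{\phi_j}\ge\langle b_j|C_j^{+}|b_j\rangle$ implied by $\rho_j\succeq0$; this yields the same $\Phi$ but trades the connectedness argument for a short matrix computation.
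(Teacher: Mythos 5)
Your proposal is correct, and its skeleton coincides with the paper's: embed the reference state as $\ket{\phi_j}\otimes\ket{e}$ on a single enlarged space, realize each $\ket{\psi_j}$ as a purification of $\rho_j$ whose overlap with that embedded reference state is exactly $\sqrt{1-\epsilon_j}$, take $\Phi$ to be the ($j$-independent) partial trace over the ancilla, and obtain the security consequence by absorbing $\Phi$ into Eve's attack (for which the paper defers formality to \cite[Lemma 8]{naharPostselectionTechnique2024}). The genuine difference is how the overlap is pinned to exactly $\sqrt{1-\epsilon_j}$ rather than merely at least $\sqrt{1-\epsilon_j}$: the paper takes the Uhlmann-optimal purification, whose overlap is $\sqrt{1-\epsilon'_j}$ with $\epsilon'_j\le\epsilon_j$, and then explicitly tensors on a fictitious qubit $F$ in a suitable superposition of $\ket{0}_F,\ket{1}_F$ that dilutes the overlap to the required value, reading off $\ket*{\phi_j^\perp}$ in closed form; you instead argue non-constructively that the purifications of $\rho_j$ on a fixed ancilla form a single unitary orbit, hence a connected set, so the continuous overlap function attains every value between $0$ (your explicit zero-overlap purification, which needs $\dim\mathcal{H}_E>\operatorname{rank}\rho_j$) and the Uhlmann maximum $\sqrt{\ev{\rho_j}{\phi_j}}\ge\sqrt{1-\epsilon_j}$. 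Both are sound; the paper's construction is fully explicit, yours is conceptually shorter but trades that for the dimension condition and the unitary-orbit/intermediate-value argument. Your anticipated obstacle—one $\ket{e}$ and $\mathcal{H}_E$ serving all $j$—is already resolved by your own setup (take $\dim\mathcal{H}_E>\max_j\operatorname{rank}\rho_j$; the trace map is manifestly $j$-independent), exactly as the paper's single pair of ancillas $S,F$ serves all $j$, and your enlargement of the reference states mirrors the paper's $\ket{\phi_j}_{aS}=\ket{\phi_j}\ket{0}_S$. The only piece I would not lean on is your closing aside: the Schur-complement route produces $\ket*{\phi_j^\perp}$ and a residual state separately for each $j$, and it is not obvious that it yields one CPTP map $\Phi$ valid for all $j$ simultaneously, which is precisely what the purification-plus-partial-trace construction guarantees.
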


\begin{proof}
Consider a fixed set $\{\rho_j\}_j$ satisfying \cref{eqapp:partial_characterization}. By Uhlmann's theorem, for each $j$ there exists a purification $\ket*{\psi''_j}_{aS}$ of $\rho_j$ such that
\begin{equation}
\label{eqapp:purification_assumption}
    \abs{\braket*{\phi_j}{\psi''_j}_{aS}}^2 = \ev{\rho_j}{\phi_j} \geq 1 - \epsilon_j,
\end{equation}
where $\ket{\phi_j}_{aS} \coloneqq \ket{\phi_j} \ket{0}_S$ with $S$ denoting an ancillary system. Without loss of generality, \cref{eqapp:purification_assumption} implies that $\ket*{\psi''_j}_{aS}$ can be expressed as
\begin{equation}
    \ket*{\psi''_j}_{aS} = e^{i \varphi_j}\left(\sqrt{1-\epsilon'_j} \ket{\phi_j}_{aS} + \sqrt{\epsilon'_j} \ket*{\phi_j^{\prime,\perp}}_{aS}\right),
    \label{eq:psi_phase}
\end{equation}
where $0\leq \epsilon'_j \leq \epsilon_j$, $\varphi_j \in [0,2\pi)$, and $\braket*{\phi_j^{\prime,\perp}}{\phi_j}_{aS} = 0$. Moreover, we can eliminate the global phase by defining
\begin{equation}
\label{eqapp:emitted_states_assumed_form}
    \ket*{\psi'_j}_{aS}  =e^{-i \varphi_j} \ket{\psi''_j}_{aS} = \sqrt{1-\epsilon'_j} \ket{\phi_j}_{aS} + \sqrt{\epsilon'_j} \ket*{\phi_j^{\prime,\perp}}_{aS},
\end{equation}
which is also a purification of $\rho_j$.

Next, we show that $\epsilon'_j$ can be replaced by $\epsilon_j$ by introducing a fictitious system $F$. Consider the states $\{\ket{\psi_j}_{aSF}\}_j$ defined as
\begin{equation}
\label{eqapp:psi_tilde}
    \ket{\psi_j}_{aSF} = \ket*{\psi'_j}_{aS} \otimes \left[\sqrt{\frac{1-\epsilon_j}{1-\epsilon'_j}} \ket{0}_F + \sqrt{1-\frac{1-\epsilon_j}{1-\epsilon'_j}} \ket{1}_F \right],
\end{equation}
where $\{\ket{0}_F,\ket{1}_F\}$ is an orthonormal basis for system $F$. This can be rewritten as
\begin{equation}
    \ket{\psi_j}_{aSF} = \sqrt{1-\epsilon_j} \ket{\phi_j}_{aSF} + \sqrt{\epsilon_j} \ket*{\phi_j^\perp}_{aSF},
\end{equation}
where $\ket{\phi_j}_{aSF} \coloneqq \ket{\phi_j}_{aS} \ket{0}_F$ and 
\begin{equation}
    \ket{\phi_j^\perp}_{aSF} = \frac{\sqrt{\epsilon_j - \epsilon'_j}}{\sqrt{\epsilon_j}} \ket{\phi_j}_{aS}\ket{1}_F + \frac{\sqrt{\epsilon'_j}}{\sqrt{\epsilon_j}} \ket*{\phi_j^{\prime,\perp}}_{aS} \otimes \left[\sqrt{\frac{1-\epsilon_j}{1-\epsilon'_j}} \ket{0}_F + \sqrt{\frac{\epsilon_j - \epsilon'_j}{1-\epsilon'_j}} \ket{1}_F\right]
\end{equation}
is a normalized state orthogonal to $\ket{\phi_j}_{aSF}$. 

Finally, we define the CPTP map $\Phi(\cdot) = \Tr_{SF}(\cdot)$. By construction, 
\begin{equation}
\label{eq:Phi_map}
   \rho_j = \Phi\big(\ketbra{\psi_j}_{aSF}\big),
\end{equation}
i.e., \cref{eq:map}, as required.

\textbf{Security implication.} Consider a fixed pair of sets $\{\rho_j\}_j$ and $\{\ket{\psi_j}\}_j$ related by \cref{eq:Phi_map}. Since generating $\{\rho_j\}_j$ is equivalent to first preparing $\{\ket{\psi_j}\}_j$ and then applying $\Phi$, the map $\Phi$ can be absorbed into Eve's attack. Consequently, any protocol secure against all attacks when Alice prepares $\{\ket{\psi_j}\}_j$ remains secure when she prepares $\{\rho_j\}_j$ (see, e.g., \cite[Lemma 8]{naharPostselectionTechnique2024} for a formal proof of this statement). Since we have shown that every set $\{\rho_j\}_j$ satisfying \cref{eqapp:partial_characterization} can be obtained via \cref{eq:Phi_map} from some set $\{\ket{\psi_j}\}_j$ of the form in \cref{eqapp:partial_characterization_pure}, it follows that proving security for all such pure state preparations guarantees security for all mixed state preparations satisfying \cref{eqapp:partial_characterization}.

\end{proof}

We remark that an equivalent result can also be derived for MDI-type protocols to justify our assumption in \cref{eq:partial_characterization_MDI_pure} given the knowledge in \cref{eq:partial_characterization_MDI}.

\section{Handling linearly dependent reference states}\label{sec:lineardep}
When the reference states $\{\ket{\phi_j}\}_j$ for $j \in \{0,\ldots,\nstates-1\}$, with $\nstates$ denoting the number of emitted states, are linearly dependent, the Gram matrix $G$ defined in the main text cannot have full rank, which may lead to numerical instability when solving \cref{eq:conic_problem_partial_characterization}. Here, we show how to reformulate the problem using a basis for $\text{span}\{\ket{\phi_j}\}_j$, effectively reducing the dimension of the Gram matrix.
Let $\ndim = \dim(\text{span}\{\ket{\phi_j}\}_j) \leq \nstates$ be the dimension of the subspace spanned by the reference states. We construct an orthonormal basis $\{\ket{l}\}_{l=0}^{\ndim-1}$ for this subspace, allowing us to express each reference state as
\begin{equation}
\label{eq:phij_l_expansion}
    \ket{\phi_j} = \sum_{l=0}^{\ndim-1} c_l^{(j)} \ket{l},
\end{equation}
where the coefficients $c_l^{(j)} = \braket{l}{\phi_j}$ are known since both $\{\ket{l}\}_l$ and $\{\ket{\phi_j}\}_j$ are known.
Now, instead of constructing the Gram matrix using the states $\{\ket{\phi_j}\}_j \cup \{\ket*{\phi_j^\perp}\}_j$ as in the main text, we construct a reduced Gram matrix $G$ using the basis states $\{\ket{l}\}_l \cup \{\ket*{\phi_j^\perp}\}_j$. This results in a $(\ndim + \nstates) \times (\ndim + \nstates)$ positive semidefinite matrix:
\begin{equation}
    G = \begin{pmatrix}
        \langle 0|0\rangle & \cdots & \langle 0|\ndim-1\rangle & \vline & \langle 0|\phi_0^\perp\rangle & \cdots & \langle 0|\phi_{\nstates-1}^\perp\rangle \\
        \vdots & \ddots & \vdots & \vline & \vdots & \ddots & \vdots \\
        \langle \ndim-1|0\rangle & \cdots & \langle \ndim-1|\ndim-1\rangle & \vline & \langle \ndim-1|\phi_0^\perp\rangle & \cdots & \langle \ndim-1|\phi_{\nstates-1}^\perp\rangle \\
        \hline
        \langle\phi_0^\perp|0\rangle & \cdots & \langle\phi_0^\perp|\ndim-1\rangle & \vline & \langle\phi_0^\perp|\phi_0^\perp\rangle & \cdots & \langle\phi_0^\perp|\phi_{\nstates-1}^\perp\rangle \\
        \vdots & \ddots & \vdots & \vline & \vdots & \ddots & \vdots \\
        \langle\phi_{\nstates-1}^\perp|0\rangle & \cdots & \langle\phi_{\nstates-1}^\perp|\ndim-1\rangle & \vline & \langle\phi_{\nstates-1}^\perp|\phi_0^\perp\rangle & \cdots & \langle\phi_{\nstates-1}^\perp|\phi_{\nstates-1}^\perp\rangle
    \end{pmatrix}.
\end{equation}
The known entries of this matrix are: the upper left $\ndim \times \ndim$ block, which equals $I_{\ndim}$ (since $\{\ket{l}\}_l$ is orthonormal), and the diagonal entries $G_{j+\ndim,j+\ndim} = 1$ for all $j$ (since all states are normalized). Additionally, the orthogonality constraint $\braket*{\phi_j^\perp}{\phi_j} = 0$ translates to
\begin{equation}
    \sum_{l=0}^{\ndim-1} G_{j+\ndim,l} c_l^{(j)} = 0, \quad \forall j.
\end{equation}
Using this reduced Gram matrix, we can express the required inner products as:
\begin{equation}
\begin{aligned}
    &\braket*{\phi_i}{\phi_j} = \sum_{l=0}^{\ndim-1} \overline{c_l^{(i)}} c_l^{(j)} \quad \text{(known)}, \\
    &\braket*{\phi_i^\perp}{\phi_j} = \sum_{l=0}^{\ndim-1} G_{i+\ndim,l} c_l^{(j)}, \\
    &\braket*{\phi_i}{\phi_j^\perp} = \sum_{l=0}^{\ndim-1} \overline{c_l^{(i)}} G_{l,j+\ndim}, \\
    &\braket*{\phi_i^\perp}{\phi_j^\perp} = G_{i+\ndim,j+\ndim}.
\end{aligned}
\end{equation}
The conic optimization problem then becomes
\begin{equation}
\begin{gathered}
    \min_{h,\rho_{AB},G} h \\
    \text{s.t.  } \frac{\langle i \vert \rho_{A} \vert j\rangle}{\sqrt{p_i p_j}} = \sqrt{(1-\epsilon_i)(1-\epsilon_j)} \sum_{l=0}^{\ndim-1} \overline{c_l^{(i)}} c_l^{(j)} + \sqrt{(1-\epsilon_i)\epsilon_j} \sum_{l=0}^{\ndim-1} G_{j+\ndim,l} c_l^{(i)} \\
    \quad + \sqrt{\epsilon_i(1-\epsilon_j)} \sum_{l=0}^{\ndim-1} \overline{c_l^{(j)}} G_{l,i+\ndim} + \sqrt{\epsilon_i\epsilon_j} G_{i+\ndim,j+\ndim}, \quad \forall i,j, \\
    \Tr[ (\dyad{j}{j}_A \otimes \Gamma_k) \rho_{AB}] = p_j Y_{k \vert j}, \quad \forall j, k, \\
    G_{m,m} = 1, \quad \forall m \in \{0,\ldots,\ndim+\nstates-1\}, \\
    G_{l,l'} = 0, \quad \forall l,l' \in \{0,\ldots,\ndim-1\}, l \neq l', \\
    \sum_{l=0}^{\ndim-1} G_{j+\ndim,l} c_l^{(j)} = 0, \quad \forall j, \\
    (h,\rho_{AB}) \in \KQKD \\
    G \succeq 0.
\end{gathered}
\end{equation}
We remark that an analogous approach can be applied to the optimization problem for MDI-type protocols in Appendix \ref{app:mdi} to handle the scenario in which the joint reference states $\{\ket{\phi_{ij}}\}_{i,j}$ are linearly dependent.

\section{Optimization problem for MDI-type protocols}
\label{app:mdi}
Here, we explain how to construct the optimization problem for MDI-type protocols, starting with the case of full state characterization, where Alice’s and Bob’s emitted states are completely known. We then show how to extend this construction to the scenario of partial state characterization.

\subsection{Full state characterization}
Consider a general MDI-type QKD protocol in which, for each round, Alice independently selects a setting $i \in \{0,\ldots,\nalice-1\}$ with probability $p_i^A$ and emits state $\ket*{\psi_i^A}$, while Bob independently selects a setting $j \in \{0,\ldots,\nbob-1\}$ with probability $p_j^B$ and emits state $\ket*{\psi_j^B}$. The joint quantum state sent to Eve is therefore $\ket{\psi_{ij}} = \ket*{\psi_i^A} \otimes \ket*{\psi_j^B}$, selected with probability $p_{ij} = p_i^A p_j^B$. To prove security, we employ the source replacement scheme where Alice and Bob generate the entangled state%
\begin{equation}
    \ket{\Psi}_{ABA'B'} = \sum_{i,j} \sqrt{p_{ij}} \ket{ij}_{AB} \ket{\psi_{ij}}_{A'B'},
\end{equation}
where $A$ ($B$) is an ancillary system that Alice (Bob) retains, and $A'$ ($B'$) is the photonic system sent through the quantum channel.

In MDI protocols, Eve performs a joint measurement on the incoming systems and announces a classical outcome. This can be described by an isometry $V_{A'B' \to CE}$, where $C$ is a classical register containing Eve's announcement and $E$ is Eve's quantum side information. Since we are interested in the reduced state $\rho_{ABC}$ that Alice and Bob share after Eve's announcement, we can define a CPTP map $\mathcal{E}_{A'B' \to C}$ that consists of first applying $V_{A'B' \to CE}$ and then tracing out system $E$, obtaining
\begin{equation}
\label{eq:eves_map_mdi}
    \rho_{ABC} = (\mathcal{I}_{AB} \otimes \mathcal{E}_{A'B' \to C}) (\ketbra{\Psi}_{ABA'B'}).
\end{equation}
Since $C$ is classical, we can decompose the state as
\begin{equation}\label{eq:decompositionmdi}
    \rho_{ABC} = \sum_{\gamma} \rho_{AB}^{(\gamma)} \otimes \ketbra{\gamma}_C,
\end{equation}
where $\gamma$ indexes Eve's possible announcements and $\rho_{AB}^{(\gamma)}$ represents the (subnormalized) conditional state of Alice and Bob's systems given announcement $\gamma$.

The asymptotic SKR for MDI protocols follows \cref{eq:asymp_skr} from the main text, but now involves optimizing over the family of subnormalized states $\{\rho_{AB}^{(\gamma)}\}_{\gamma}$:
\begin{equation}
\label{eq:convex_problem_mdi}
\begin{gathered}
    \min_{\{\rho_{AB}^{(\gamma)}\}_{\gamma}} D(\mathcal{G}(\rho_{ABC})\|\mathcal{Z}(\mathcal{G}(\rho_{ABC}))) \\
    \text{s.t.  } \rho_{ABC} = \sum_{\gamma} \rho_{AB}^{(\gamma)} \otimes \ketbra{\gamma}_C, \\
    \sum_{\gamma}\langle ij \vert \rho_{AB}^{(\gamma)} \vert i'j' \rangle = \sqrt{p_{ij} p_{i'j'}} \braket{\psi_{i'j'}}{\psi_{ij}}, \quad \forall i,i',j,j', \\
    \text{Tr}[ \ketbra{ij}_{AB} \rho_{AB}^{(\gamma)}] = p_{ij} Y_{\gamma \vert ij}, \quad \forall i,j,\gamma, \\
    \rho_{AB}^{(\gamma)} \succeq 0, \quad \forall \gamma.
\end{gathered}
\end{equation}
The second constraint is equivalent to
\begin{equation}
   \rho_{AB} \coloneqq \text{Tr}_C[\rho_{ABC}] = \sum_{\gamma} \rho_{AB}^{(\gamma)} =\text{Tr}_{A'B'}  [\ketbra{\Psi}_{ABA'B'}],
\end{equation}
and comes from the fact that Eve's map does not act on systems $AB$ (see \cref{eq:eves_map_mdi}), and therefore the marginal state on $AB$ must be the same before and after Eve applies her map. The third constraint in \cref{eq:convex_problem_mdi} comes from Alice and Bob's observations during protocol execution, where $Y_{\gamma|ij}$ refers to the conditional probability that Eve announces outcome $\gamma$ given that Alice selects setting $i$ and Bob selects setting $j$. The remaining constraints ensure that $\rho_{ABC}$ is properly normalized and positive semidefinite.

As in the main text, following the approach in \cite{lorenteQuantumKey2025}, this can be reformulated as the conic optimization problem:
\begin{equation}
\label{eq:conic_problem_mdi}
\begin{gathered}
    \min_{h,\,\{\rho_{AB}^{(\gamma)}\}_{\gamma}} h \\
    \text{s.t.  } \rho_{ABC} = \sum_{\gamma} \rho_{AB}^{(\gamma)} \otimes \ketbra{\gamma}_C, \\
    \sum_{\gamma}\langle ij \vert \rho_{AB}^{(\gamma)} \vert i'j' \rangle = \sqrt{p_{ij} p_{i'j'}} \braket{\psi_{i'j'}}{\psi_{ij}}, \quad \forall i,i',j,j', \\    \text{Tr}[ \ketbra{ij}_{AB} \rho_{AB}^{(\gamma)}] = p_{ij} Y_{\gamma \vert ij}, \quad \forall i,j,\gamma, \\
    (h, \rho_{ABC}) \in \KQKD.
\end{gathered}
\end{equation}

\subsection{Partial state characterization}

Now consider the realistic scenario where Alice and Bob only have partial knowledge of their emitted states. Specifically, they only know that their joint state $\rho_{ij} = \rho_i \otimes \rho_j$, emitted when they select settings $i$ and $j$, respectively, satisfies
\begin{equation}
    \ev{\rho_{ij}}{\phi_{ij}} \geq  1 - \varepsilon_{ij},
    \label{eq:partial_characterization_MDI}
\end{equation}
where $\{\ket{\phi_{ij}}\}_{ij}$ are known reference states and $\{\varepsilon_{ij}\}_{ij}$ are known deviation bounds s.t.~$0\leq \varepsilon_{ij} \leq 1$. Following \cref{app:justification_pure_state_assumption}, we can assume, without loss of generality, that the joint emitted states are pure, i.e.\ $\rho_{ij} = \ketbra{\psi_{ij}}$, and have the form
\begin{equation}
\label{eq:partial_characterization_MDI_pure}
    \ket{\psi_{ij}} = \sqrt{1-\varepsilon_{ij}} \ket{\phi_{ij}} + \sqrt{\varepsilon_{ij}} \ket*{\phi_{ij}^{\perp}},
\end{equation}
where $\ket*{\phi_{ij}^{\perp}}$ are unknown states orthogonal to the reference states: $\braket*{\phi_{ij}^{\perp}}{\phi_{ij}} = 0$.

The key insight from the main text, i.e., that the optimization depends only on inner products between states, applies directly to the MDI case as well. We introduce a Gram matrix $G$ containing all inner products between the states in $\{\ket{\phi_{ij}}\}_{ij} \cup \{\ket*{\phi_{ij}^{\perp}}\}_{ij}$. This is a $(2\nalice\nbob \times 2\nalice\nbob)$ positive semidefinite matrix:
\begin{equation}
   G = \begin{pmatrix}
       \langle\phi_{00}|\phi_{00}\rangle & \langle\phi_{00}|\phi_{01}\rangle & \cdots & \vline & \langle\phi_{00}|\phi_{00}^\perp\rangle & \langle\phi_{00}|\phi_{01}^\perp\rangle & \cdots \\
       \langle\phi_{01}|\phi_{00}\rangle & \langle\phi_{01}|\phi_{01}\rangle & \cdots & \vline & \langle\phi_{01}|\phi_{00}^\perp\rangle & \langle\phi_{01}|\phi_{01}^\perp\rangle & \cdots \\
       \vdots & \vdots & \ddots & \vline & \vdots & \vdots & \ddots \\
       \hline
       \langle\phi_{00}^\perp|\phi_{00}\rangle & \langle\phi_{00}^\perp|\phi_{01}\rangle & \cdots & \vline & \langle\phi_{00}^\perp|\phi_{00}^\perp\rangle & \langle\phi_{00}^\perp|\phi_{01}^\perp\rangle & \cdots \\
       \langle\phi_{01}^\perp|\phi_{00}\rangle & \langle\phi_{01}^\perp|\phi_{01}\rangle & \cdots & \vline & \langle\phi_{01}^\perp|\phi_{00}^\perp\rangle & \langle\phi_{01}^\perp|\phi_{01}^\perp\rangle & \cdots \\
       \vdots & \vdots & \ddots & \vline & \vdots & \vdots & \ddots
   \end{pmatrix},
\end{equation}
where each index pair $(ij)$ for each subblock runs over all $\nalice \times \nbob$ combinations of Alice's and Bob's settings. Note that some entries of $G$ are known, while others are unknown. More specifically, the known entries are: the entire upper left subblock (since the states $\{\ket{\phi_{ij}}\}_{ij}$ are known), all the diagonal entries (since all the states are normalized), and all the entries of the form $\braket*{\phi_{ij}^{\perp}}{\phi_{ij}}$ (which are equal to zero). Thus, by introducing $G$ as an optimization variable constrained by this known information, we can reformulate the optimization problem with partial state characterization as a conic problem. For MDI protocols with partial state characterization, this problem becomes
\begin{equation}
\label{eq:conic_problem_mdi_partial}
\begin{gathered}
    \min_{h,\, \{\rho_{AB}^{(\gamma)}\}_{\gamma},\, G} h \\
    \text{s.t.  } 
    \rho_{ABC} = \sum_{\gamma} \rho_{AB}^{(\gamma)} \otimes \ketbra{\gamma}_C, \\
    \frac{\langle ij \vert \rho_{AB}^{(\gamma)} \vert i'j' \rangle}{\sqrt{p_{ij} p_{i'j'}}} = \sum_{a,b \in \{0,1\}} \sqrt{\varepsilon_{ij}^a(1-\varepsilon_{ij})^{1-a}\varepsilon_{i'j'}^b(1-\varepsilon_{i'j'})^{1-b}} G_{(ij,a),(i'j',b)}, \\
    \text{Tr}[ \ketbra{ij}_{AB} \rho_{AB}^{(\gamma)}] = p_{ij} Y_{\gamma \vert ij}, \quad \forall i,j,\gamma, \\
    G_{(ij,0),(ij,0)} = 1, \quad G_{(ij,1),(ij,1)} = 1, \quad G_{(ij,0),(ij,1)} = 0, \quad  \forall i,j, \\
    G_{(ij,0),(i'j',0)} = \braket{\phi_{ij}}{\phi_{i'j'}}, \quad \forall (ij) \neq (i'j'), \\
    (h, \rho_{ABC}) \in \KQKD, \quad G \succeq 0.
\end{gathered}
\end{equation}
where, for notational convenience, we have indexed the matrix $G$ using pairs $(ij,a)$ for $a \in \{0,1\}$, with $a=0$ corresponding to $\ket{\phi_{ij}}$ and $a=1$ to $\ket*{\phi_{ij}^{\perp}}$.

\section{Protocol modeling}\label{app:modelling}

Here, we explain how to model and simulate the SKR for the BB84 protocol and the coherent-light-based MDI protocol introduced in \cite{navarretePracticalQuantum2021}, which are considered in the numerical simulations presented in the main text.

\subsection{BB84 protocol}

Since this is a prepare-and-measure protocol, we use the conic optimization problem in \cref{eq:conic_problem_partial_characterization}. In this protocol, Alice sends four different states indexed by $j \in \{0,1,2,3\}$, which correspond to the bit-and-basis values $\{0_Z,1_Z,0_X,1_X\}$, respectively, and are selected with probabilities
\begin{equation}
\label{eq:bb84_probabilities}
    p_0 = p_1 = {p_{Z_A}}/{2} \qquad \text{and} \qquad p_2=p_3={p_{X_A}}/{2},
\end{equation}
with $p_{X_A} = 1-p_{Z_A}$. The reference states $\ket{\phi_j}$ are given by \cref{eq:qubit_state_model}, from which all the inner products $\braket{\phi_{j'}}{\phi_{j}}$ can be directly calculated.

As for Bob, we consider that he performs an active basis choice, although our approach can equally be applied to setups in which he performs a passive basis choice, e.g., using a 50:50 beam splitter to divide the incoming signal. In an active basis choice setup, Bob chooses between the $Z$ basis or the $X$ basis measurement with probability $p_{Z_B}$ and $p_{X_B}=1-p_{Z_B}$, respectively, and performs the selected measurement, obtaining either a bit value or an inconclusive outcome. Since the focus of our work is on source imperfections, we assume that Bob's detectors are identical, and that the only non-idealities of his measurement setup are: (1) non-unity detector efficiencies and (2) dark counts. For attempts to incorporate partially characterized imperfections in Bob's setup into numerical security proofs such as ours, we refer the reader to Ref.~\cite{naharImperfectDetectors2025}.

Bob's $Z$ and $X$ POVMs can be described by $\{\Gamma_0^{(Z)},\Gamma_1^{(Z)},\Gamma_\bot\}$ and $\{\Gamma_0^{(X)},\Gamma_1^{(X)},\Gamma_\bot\}$, respectively. However, in our model, Bob performs a single POVM $\{\Gamma_k\}_k$. This is not a problem, since we can regard Bob's overall action as a five outcome POVM $\{\Gamma_{0Z},\Gamma_{1Z},\Gamma_{0X},\Gamma_{1X},\Gamma_\bot\}$, where
\begin{equation}
  \Gamma_{0Z} = p_{Z_B} \Gamma_0^{(Z)}, \quad \Gamma_{1Z} = p_{Z_B} \Gamma_1^{(Z)}, \quad \Gamma_{0X} = p_{X_B} \Gamma_0^{(X)}, \quad \Gamma_{1X} = p_{X_B} \Gamma_1^{(X)} {\rm ~~and~~~}
  \Gamma_{\perp} = \mathbb{I} - \Gamma_{0Z} - \Gamma_{1Z} - \Gamma_{0X} - \Gamma_{1X}.
\end{equation}

Note that, in reality, Bob's POVM acts on an infinitely-dimensional system, since Eve can resend to Bob any number of photons she wishes. However, for an active BB84 setup in which Bob's detectors are identical, we can apply the qubit squasher \cite{beaudrySquashingModels2008,gittsovichSquashingModel2014}. Moreover, we can consider without loss of generality that any non-unit efficiency in Bob's detectors is part of the quantum channel, and their dark counts can be regarded as channel-induced noise (see \cite{naharImperfectDetectors2025} for a proof of this). Thus, we can consider that Bob's POVM is ideal and acts on a three-dimensional system with orthonormal basis $\{\ket{0}_B,\ket{1}_B,\ket{2}_B\}$, where $\ket{0}_B$ ($\ket{1}_B$) corresponds to a single-photon with horizontal (vertical) encoding, and $\ket{2}_B$ corresponds to no photon. In particular, we can consider Bob's POVM elements to be
\begin{equation}
\label{eq:bb84_bob_povm_ideal}
    \Gamma_{0Z} = p_{Z_B} \ketbra{0}_B, \quad  \Gamma_{1Z} = p_{Z_B} \ketbra{1}_B, \quad \Gamma_{0X} = p_{X_B} \ketbra{+}_B, \quad \Gamma_{1X} = p_{X_B} \ketbra{-}_B {\rm ~~and~~~} \Gamma_{\bot} = \ketbra{2}_B,
\end{equation}
where we have defined $\ket{\pm}_B = (\ket{0}_B \pm \ket{1}_B)/\sqrt{2}$.

Now, we define the maps $\mathcal{G}$ and $\mathcal{Z}$, which specify the filtered state from which the sifted key is extracted, and how Alice extracts her sifted key bits, respectively. Here, we consider the efficient version of the BB84 protocol, in which Alice and Bob use the $Z$ basis most of the time, and extract the key only from this basis. This  maximizes the sifting efficiency, and is thus expected to result in higher SKRs. We remark, however, that our technique also directly applies to the case in which Alice and Bob extract key from both bases, for any values of $p_{Z_A}$ and $p_{Z_B}$.

Since the sifted key is extracted from the events in which Alice selects $j=0$ or $j=1$ and Bob obtains $k=0Z$ or $k=1Z$, the map $\mathcal{G}$ has the form
\begin{equation}
    \mathcal{G} (\rho) = G \rho G,
\end{equation}
where
\begin{equation}\label{eq:bb84g}
    G = \sqrt{(\ketbra{0}_A + \ketbra{1}_A) \otimes (\Gamma_{0Z}+\Gamma_{1Z})} = \sqrt{p_{Z_B}} (\ketbra{0}_A + \ketbra{1}_A) \otimes (\ketbra{0}_B + \ketbra{1}_B).
\end{equation}
On the other hand, since Alice assigns bit "0" when $j=0$ and bit "1" when $j=1$, the map $\mathcal{Z}$ has the form
\begin{equation}
    \mathcal{Z}(\sigma) = Z_0 \sigma Z_0 + Z_1 \sigma Z_1,
\end{equation}
where
\begin{equation}
    Z_0 = \ketbra{0}_A \otimes (\ketbra{0}_B + \ketbra{1}_B) {\rm ~~and~~} Z_1 = \ketbra{1}_A \otimes (\ketbra{0}_B + \ketbra{1}_B).
\end{equation}

As explained in Ref.~\cite{lorenteQuantumKey2025}, in order to use those maps in the conic problem in \cref{eq:conic_problem_partial_characterization}, we need to perform facial reduction on them, that is, find strictly positive maps $\gmap, \zmap$ such that $\mathcal{G}(\rho) = V\gmap(\rho)V^\dagger$ and $\mathcal{Z} \circ \mathcal{G}(\rho) = W \zmap(\rho) W^\dagger$ for some isometries $V$ and $W$. A straightforward calculation shows that the following maps satisfy this: $\gmap(\rho) := V^\dagger\mathcal{G}(\rho)V$ and $\zmap(\rho) :=W^\dagger(\mathcal{Z} \circ \mathcal{G}(\rho))W$ for $V = W = (\ket{0}_A\bra{0}_{A'} + \ket{1}_A\bra{1}_{A'}) \otimes (\ket{0}_B\bra{0}_{B'} + \ket{1}_B\bra{1}_{B'})$, where $\{\ket{0}_{A'}, \ket{1}_{A'}\}$ and $\{\ket{0}_{B'}, \ket{1}_{B'}\}$ are orthonormal bases of dimension 2, instead of the original dimensions 4 and 3, respectively.

The only remaining undefined terms in \cref{eq:conic_problem_partial_characterization} and the key rate formula in \cref{eq:asymp_skr} are the yield probabilities $Y_{k|j}$, the probability that a round is used for key generation 
\begin{equation}\label{eq:ppassBB84}
p_\mathrm{pass} = \frac{p_{Z_A}}{2} (Y_{Z0\vert 0}+Y_{Z0\vert 1}+Y_{Z1\vert 0}+Y_{Z1\vert 1}),
\end{equation}
and the cost of error correction per sifted key bit $\lambda_\mathrm{EC}$. In a real implementation, these terms would be observed during the protocol execution. Instead, here we estimate them using the channel model below. 

\textit{Note}: Although the approach described in this section is valid for any $p_{Z_A}, p_{Z_B} \in (0,1)$ the asymptotic key rate is achieved in the limit $p_{Z_A}, p_{Z_B} \to 1$. While this limit can be directly implemented in \cref{eq:bb84g,eq:ppassBB84} by setting $p_{Z_A} = p_{Z_B} = 1$, setting $p_{X_A} = p_{X_B} = 0$ in \cref{eq:bb84_probabilities,eq:bb84_bob_povm_ideal,eq:conic_problem_partial_characterization} leads to divisions by zero and loss of constraints. To circumvent this issue, we simultaneously set $p_{Z_A} = p_{Z_B} = 1$ and $p_{X_A} = p_{X_B} = 1$ throughout the optimization. While this results in an unphysical scenario with $\Tr[\rho_{AB}]=2$, the optimization problem remains well-defined and correctly recovers the asymptotic key rate. This approach was used to generate the results in the main text. \\

\noindent \textbf{Channel model} \\
We consider a standard BB84 channel model with a lossy channel and an active-basis receiver. The overall transmission efficiency is $\eta=\eta_c\eta_d$ with channel transmission $\eta_c=10^{-0.02\,l}$ (for channel length $l$ in km) and detector efficiency $\eta_d$. The dark-count probability is denoted by $p_d$, and for simplicity we assume no channel misalignment. By neglecting $O(p_d^2)$ terms, we can express the yields as
\begin{equation}
\begin{aligned}
&Y_{Z0|j} = p_{Z_B} \left( (1-\eta)\,p_d \;+\; \frac{\eta}{2}\Big[\,1 + \big(1-p_d\big)\cos\!\big(2\theta_j\big)\,\Big]\right), \\
&Y_{Z1|j} =  p_{Z_B} \left((1-\eta)\,p_d \;+\; \frac{\eta}{2}\Big[\,1 - \big(1-p_d\big)\cos\!\big(2\theta_j\big)\,\Big]\right), \\
&Y_{X0|j} =  p_{X_B} \left((1-\eta)\,p_d \;+\; \frac{\eta}{2}\Big[\,1 + \big(1-p_d\big)\sin\!\big(2\theta_j\big)\,\Big]\right), \\
&Y_{X1|j} =  p_{X_B} \left((1-\eta)\,p_d \;+\; \frac{\eta}{2}\Big[\,1 - \big(1-p_d\big)\sin\!\big(2\theta_j\big)\,\Big]\right),
\label{eq:yieldsBB84}
\end{aligned}
\end{equation}
where $Y_{B\beta\,|\,j}$ denotes the probability that Bob selects basis $B \in \{Z,X\}$ and obtains bit outcome $\beta \in \{0,1\}$, conditioned on Alice having selected setting $j$. As discussed in the main text, we set $\theta_j = (1+\delta/\pi)\,\varphi_j/2$, with $\varphi_j \in \{0,\pi,\pi/2,3\pi/2\}$ corresponding to $j \in \{0,1,2,3\}$, where $\delta \in [0,\pi)$ quantifies the magnitude of the state preparation flaws.

As for the cost of error correction, we assume that $\lambda_\mathrm{EC} = f h(e_Z)$, where $f$ is the error correction inefficiency, $h(x) = -x \log_2 x - (1-x) \log_2 (1-x)$ is the binary entropy function, and $e_Z$ is the bit-error rate, which is calculated from the yields in \cref{eq:yieldsBB84} as
\begin{align}
e_Z = \frac{Y_{Z1|0} + Y_{Z0|1}}
{Y_{Z0|0} + Y_{Z1|0} + Y_{Z0|1} + Y_{Z1|1}}. \label{eq:eZdef}
\end{align}

\subsection{Coherent-light-based MDI protocol}

In the coherent-light-based MDI protocol in \cite{navarretePracticalQuantum2021}, Alice and Bob ideally prepare the states $\{\ket{\alpha}, \ket{-\alpha}, \ket{\rm vac}\}$, where $\ket{\alpha}$ and $\ket{-\alpha}$ are coherent states with opposite phases (used for key generation), and $\ket{\rm vac}$ is a vacuum state (used only for parameter estimation). We model characterized state preparation flaws by considering that one of the emitted coherent states is shifted by an angle $\delta$. That is, for each user, we consider the reference states
\begin{equation}
\begin{aligned}
    &\ket{\phi_0} = \ket{\alpha}, ~~\ket{\phi_1} = \ket{-\alpha e^{i\delta}}~~ {\rm and} ~~\ket{\phi_2} = \ket{\text{vac}},
    \end{aligned}
\end{equation}
with the joint reference states given by $\ket{\phi_{ij}} = \ket{\phi_i} \otimes \ket{\phi_j}$. The inner products $\braket{\phi_{i'j'}}{\phi_{ij}}$ can then be calculated directly. We consider that the state emission probabilities are $p_0 = p_1 = p^c/2$ and $p_2 = 1-p^c$, where $p^c$ is the probability that the users select a coherent state and each coherent state is selected equally likely. The joint selection probabilities are thus $p_{ij} = p_i p_j$ for $i,j \in \{0,1,2\}$.

If Charlie is honest, he interferes the two incoming signals at a $50{:}50$ beam splitter, followed by threshold detectors $D_c$ and $D_d$ placed at the output ports corresponding to constructive and destructive interference, respectively. He then announces the result $\gamma=0$ if only detector $D_c$ clicks, $\gamma=1$ if only detector $D_d$ clicks, or $\gamma=2$ if neither detector clicks or both detectors click. Of course, Charlie may be dishonest, and even controlled by Eve, so they may select the announcements using any measurement and strategy they see fit. 

Note that, in the MDI case, no squashing map is needed since Alice and Bob's systems $A$ and $B$ are both guaranteed to be finite dimensional, since they are fictitious ancillary systems kept in their labs.

Alice and Bob extract their sifted key from the rounds in which $i,j\in\{0,1\}$ and $\gamma \in \{0,1\}$, with $i=0$ or $j=0$ corresponding to bit 0 and $i=1$ or $j=1$ corresponding to bit 1. For the rounds in which $\gamma=1$, Bob flips his sifted key bit, since destructive interference should indicate anticorrelation between Alice's and Bob's coherent states. The maps $\mathcal{G}$ and $\mathcal{Z}$ are thus given by
\begin{subequations}
\begin{gather}
    \mathcal{G}(\rho) = G\rho G,~ \textrm{with}~G = (\ketbra{0}_A+\ketbra{1}_A) \otimes (\ketbra{0}_B+\ketbra{1}_B) \otimes (\ketbra{0}_C+\ketbra{1}_C), \\
    \mathcal{Z}(\sigma) = Z_0 \sigma Z_0 + Z_1 \sigma Z_1, ~ \textrm{with}~ Z_j = \ketbra{j}_A \otimes (\ketbra{0}_B+\ketbra{1}_B) \otimes (\ketbra{0}_C+\ketbra{1}_C) \textrm{ for } j \in \{0,1\}.
\end{gather}
\end{subequations}
For a more efficient optimization, we now use the decomposition \eqref{eq:decompositionmdi} to simplify these maps. Applying them to $\rho_{ABC}$ we get
\begin{subequations}
\begin{gather}
    \mathcal{G}(\rho_{ABC}) = \sum_{\gamma=0}^1 \mathcal{G}'(\rho_{AB}^{(\gamma)}) \otimes \ketbra{\gamma}_C, \\
    \mathcal{Z} \circ \mathcal{G}(\rho_{ABC}) = \sum_{\gamma=0}^1 \mathcal{Z}'(\rho_{AB}^{(\gamma)})\otimes \ketbra{\gamma}_C,
\end{gather}
\end{subequations}
where
\begin{subequations}
\begin{gather}
    \mathcal{G}'(\rho) = G'\rho G',~ \textrm{with}~G' = (\ketbra{0}_A+\ketbra{1}_A) \otimes (\ketbra{0}_B+\ketbra{1}_B), \\
    \mathcal{Z}'(\sigma) = Z_0' \sigma Z_0' + Z_1' \sigma Z_1', ~ \textrm{with}~ Z_j' = \ketbra{j}_A \otimes (\ketbra{0}_B+\ketbra{1}_B) \textrm{ for } j \in \{0,1\}.
\end{gather}
\end{subequations}
Now, we use the fact that $D(\sum_i \rho_i \otimes \ketbra{i}\| \sum_i \sigma_i \otimes \ketbra{i}) = \sum_i D(\rho_i \| \sigma_i)$ and conclude that
\begin{equation}
    D(\mathcal{G}(\rho_{ABC}) \| \mathcal{Z} \circ \mathcal{G}(\rho_{ABC})) = \sum_{\gamma=0}^1 D(\mathcal{G}'(\rho_{AB}^{(\gamma)}) \| \mathcal{Z}'(\rho_{AB}^{(\gamma)})),
\end{equation}
which allows us to break down the relative entropy optimization into two cones of lower dimension. As in the BB84 case, we now need to do facial reduction on the maps $\mathcal{G}', \mathcal{Z}'$. Let then $V = (\ket{0}_A\bra{0}_{A'} + \ket{1}_A\bra{1}_{A'}) \otimes (\ket{0}_B\bra{0}_{B'} + \ket{1}_B\bra{1}_{B'})$, where $\{\ket{0}_{A'}, \ket{1}_{A'}\}$ and $\{\ket{0}_{B'}, \ket{1}_{B'}\}$ are orthonormal bases of dimension 2, instead of the original 3. Then $\gmap(\rho) := V^\dagger \mathcal{G}'(\rho)V$ and $\zmap(\rho) := V^\dagger \mathcal{Z}'(\rho)V$ are strictly positive and respect $\mathcal{G}'(\rho) = V \gmap(\rho) V^\dagger$ and $\mathcal{Z}'(\rho) = V \zmap(\rho) V^\dagger$, as required.

The remaining terms in the optimization program in \cref{eq:conic_problem_mdi_partial} and the key rate formula in \cref{eq:asymp_skr} are the yield probabilities $Y_{\gamma|ij}$, the probability that a round is used for key generation
\begin{equation}
\label{eq:ppass_MDI}
  p_\mathrm{pass} = \sum_{i,j\in \{0,1\}} p_{ij} (Y_{0\vert ij} + Y_{1 \vert ij}),  
\end{equation}
and the cost of error correction per sifted key bit $\lambda_\mathrm{EC}$. In a real implementation, these terms would be observed during the protocol execution. Instead, here we estimate them using the channel model below. 

\textit{Note}: Similarly to the BB84 protocol, the approach described in this section is valid for any $p^c \in (0,1)$. However, the asymptotic key rate is achieved in the limit $p^c \to 1$ and $p_2=1-p^c \to 0$. While this limit can be directly implemented in \cref{eq:ppass_MDI}, setting $p_2=0$ in \cref{eq:conic_problem_mdi_partial} leads to divisions by zero and loss of constraints. To circumvent this issue, one can simultaneously set $p^c = 1$ and $p_2 = 1$ throughout the optimization. While this results in an unphysical scenario with $\Tr[\rho_{ABC}]=4$, the optimization problem remains well-defined and correctly recovers the asymptotic key rate. Equivalently, one can set $p^c = 2/3$ and $p_2  = 1/3$ and multiply the result by $9/4$. We used the latter approach to generate the results in the main text and in \cref{app:additional_simulations}. \\

\noindent \textbf{Channel model} \\
We model the links from Alice and Bob to Charlie as independent lossy channels with transmittance $\eta$. For simplicity, we assume that both detectors at Charlie's have a dark-count probability $p_d$. Recall that Charlie announces $\gamma = 0$ if detector $D_c$ clicks and detector $D_d$ does not click, $\gamma = 1$ if detector $D_c$ does not click and detector $D_d$ clicks, and $\gamma = 2$ otherwise. Under these assumptions, the yields $Y_{\gamma \vert ij}$ can be expressed as
\begin{equation}
\begin{aligned}
Y_{0|i,j}
&= \Big[1-\exp\!\Big(-\frac{\eta}{2}\,\big|\phi_i+\phi_j\big|^2\Big)\,(1-p_d)\Big]\,
   \exp\!\Big(-\frac{\eta}{2}\,\big|\phi_i-\phi_j\big|^2\Big)\,(1-p_d), \\
Y_{1|i,j}
&= \Big[1-\exp\!\Big(-\frac{\eta}{2}\,\big|\phi_i-\phi_j\big|^2\Big)\,(1-p_d)\Big]\,
   \exp\!\Big(-\frac{\eta}{2}\,\big|\phi_i+\phi_j\big|^2\Big)\,(1-p_d),
   \label{eq:yields_MDI}
\end{aligned}
\end{equation}
and $Y_{2|i,j}=1-Y_{0|i,j}-Y_{1|i,j}$. The parameters $\phi_i$ in \cref{eq:yields_MDI} represent the complex amplitudes of the coherent states, with values $\phi_0=\alpha$, $\phi_1=-\alpha e^{i\delta}$, and $\phi_2=0$, where $\delta \in [0,\pi)$ quantifies the magnitude of the state preparation flaws.

As for the cost of error correction, we assume that $\lambda_\mathrm{EC} = f h(e_\textrm{bit})$, where $f$ is the error correction inefficiency, $h(x) = -x \log_2 x - (1-x) \log_2 (1-x)$ is the binary entropy function, and $e_\textrm{bit}$ is the bit-error rate, which can be defined directly from the yields in \cref{eq:yields_MDI} as
\begin{equation}
e_\textrm{bit} =
\frac{\displaystyle
\sum_{(i,j)\in\mathcal{S}_{\mathrm{diff}}} Y_{0|i,j}
\;+\;\sum_{(i,j)\in\mathcal{S}_{\mathrm{same}}} Y_{1|i,j}
}
{\displaystyle
\sum_{(i,j)\in\mathcal{S}_{\mathrm{same}}\cup\mathcal{S}_{\mathrm{diff}}}
\Big( Y_{0|i,j} + Y_{1|i,j} \Big)}, \label{eq:mdi-ez}
\end{equation}
where $\mathcal{S}_{\mathrm{same}}=\{(0,0),(1,1)\}$ and $\mathcal{S}_{\mathrm{diff}}=\{(0,1),(1,0)\}$.

\section{Reconstructing solution}\label{sec:gram}
Here, we show that the Gram matrix formulation of the phase-error rate estimation in \cite{curras-lorenzoNumericalSecurity2025} is not a relaxation, but an equivalent formulation. We do this by showing that given a positive semidefinite Gram matrix $G$ respecting the constraints, we can reconstruct measurement operators $M_\gamma$ and states $\ket{\phi_j}, \ket*{\phi_j^\perp}$ that reproduce the entries of $G$.

Since $G$ is positive semidefinite, there exists a matrix $V$ such that $V^\dagger V = G$. The columns of $V$ are the vectors $\ket{M_\gamma\phi_j},\ket*{M_\gamma\phi_j^\perp}$ that give the desired inner products. We need then to find $M_\gamma, \ket{\phi_j}, \ket*{\phi_j^\perp}$ such that $M_\gamma\ket{\phi_j} = \ket{M_\gamma\phi_j}$ and $M_\gamma\ket*{\phi_j^\perp} = \ket*{M_\gamma\phi_j^\perp}$.

Let then $g_\gamma$ be the principal submatrix of $G$ containing the rows and columns corresponding to $\ket{M_\gamma\phi_j},\ket*{M_\gamma\phi_j^\perp}$ for all $j$. Since they are principal submatrices of a positive semidefinite matrix, they are also positive semidefinite. Now let $g = \sum_\gamma g_\gamma$. It is also positive semidefinite, and therefore there exists $W$ such that $W^\dagger W = G$. Let $\ket{\phi_j}, \ket*{\phi_j^\perp}$ be the columns of $W$. We have that $\braket*{\phi_j^{(\perp)}}{\phi_i^{(\perp)}} = \sum_\gamma \braket*{M_\gamma\phi_j^{(\perp)}}{M_\gamma\phi_i^{(\perp)}}$, and therefore the inner products between the reconstructed vectors obey the constraints imposed on $G$. 

It remains to reconstruct the measurement operators $M_\gamma$. This amounts to solving the linear systems $M_\gamma\ket{\phi_j} = \ket{M_\gamma\phi_j}$ and $M_\gamma\ket*{\phi_j^\perp} = \ket*{M_\gamma\phi_j^\perp}$ over $M_\gamma$, or equivalently $M_\gamma W = V_\gamma$, where $V_\gamma$ contains the columns of $V$ associated to index $\gamma$.
To do this we need a more careful construction of $W$. Let an eigendecomposition of $g$ be $\sum_j \lambda_j \ketbra{v_j}{v_j}$, and define 
\begin{equation}
    W = \sum_{j;\lambda_j > 0} \sqrt{\lambda_j}\ketbra{j}{v_j},
\end{equation}
where $\{\ket{j}\}$ is an orthonormal basis of dimension $\operatorname{rank}(g)$. This particular choice of $W$ has a right inverse
\begin{equation}
W^R = \sum_{j;\lambda_j > 0} \frac1{\sqrt{\lambda_j}}\ketbra{v_j}{j},
\end{equation}
and for all $\gamma$ we can define $M_\gamma = V_\gamma W^R$.

\section{Additional simulations}
\label{app:additional_simulations}

Here, we investigate the SKR of the coherent-light-based MDI protocol when $\delta=0$ and $p_d = 10^{-4}$. The results shown in \cref{fig:mdipd4} indicate that, in this regime, there is also a clear difference between the two approaches, with our method outperforming that in \cite{curras-lorenzoNumericalSecurity2025}.
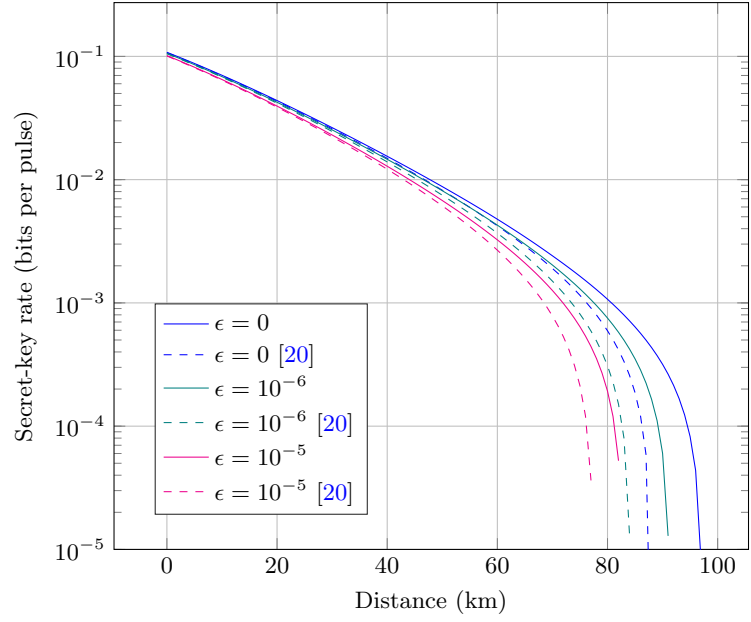
\begin{figure}[h!]
	\centering
 	\begin{tikzpicture}
		\begin{axis}[%
			scale only axis,
            ymode = log,
			ymin=1e-5,
			grid=major,
			xlabel = {Distance (km)},
            ylabel = {Secret-key rate (bits per pulse)},
			axis background/.style={fill=white},
			legend style={at={(0.4,0.45)},legend cell align=left, align=left, draw=white!15!black}
			]
            \addplot[color=blue] table[col sep=space] {plot_data/mdid0e0pd4};
            \addlegendentry{$\epsilon = 0$}
            \addplot[color=blue, dashed] table[col sep=space] {plot_data/mdid0e0pd4eph};
            \addlegendentry{$\epsilon = 0$ \cite{curras-lorenzoNumericalSecurity2025}}
            \addplot[color=teal] table[col sep=space] {plot_data/mdid0e6pd4};
            \addlegendentry{$\epsilon = 10^{-6}$}
            \addplot[color=teal, dashed] table[col sep=space] {plot_data/mdid0e6pd4eph};
            \addlegendentry{$\epsilon = 10^{-6}$ \cite{curras-lorenzoNumericalSecurity2025}}
            \addplot[color=magenta] table[col sep=space] {plot_data/mdid0e5pd4};
            \addlegendentry{$\epsilon = 10^{-5}$}
            \addplot[color=magenta, dashed] table[col sep=space] {plot_data/mdid0e5pd4eph};
            \addlegendentry{$\epsilon = 10^{-5}$ \cite{curras-lorenzoNumericalSecurity2025}}
        \end{axis}
	\end{tikzpicture}
	\caption{Secret-key rate (logarithmic scale) as a function of distance for the coherent-light-based MDI protocol with $\delta = 0$, $p_d = {10^{-4}}$ and different values of $\epsilon$ (which accounts for both Alice's and Bob's source imperfections). The solid lines correspond to our conic optimization approach while the dashed lines correspond to the phase-estimation method of \cite{curras-lorenzoNumericalSecurity2025}. The parameter $\alpha$ has been optimized for each distance point, taking values roughly in the range $[0.1, 0.4]$. All computations were performed with double floats for increased precision and stability.}
 \label{fig:mdipd4}
\end{figure} 

\end{document}